\documentclass[preprint,12pt]{elsarticle}

%% Use the option review to obtain double line spacing
%\documentclass[preprint,review,12pt]{elsarticle}

%% Use the options 1p,twocolumn; 3p; 3p,twocolumn; 5p; or 5p,twocolumn
%% for a journal layout:
%% \documentclass[final,1p,times]{elsarticle}
%% \documentclass[final,1p,times,twocolumn]{elsarticle}
%% 
%% \documentclass[final,3p,times]{elsarticle}
%% \documentclass[final,3p,times,twocolumn]{elsarticle}
%% \documentclass[final,5p,times]{elsarticle}
%% \documentclass[final,5p,times,twocolumn]{elsarticle}

%This part is used to delete 'Preprint submitted to Elsevier'
\makeatletter
\def\ps@pprintTitle{%
 \let\@oddhead\@empty
 \let\@evenhead\@empty
 \def\@oddfoot{\centerline{\thepage}}%
 \let\@evenfoot\@oddfoot}
\makeatother

%\journal{arXiv}

\usepackage[margin=.8 in]{geometry}
\usepackage{xcolor}

\usepackage{graphics} % for pdf, bitmapped graphics files
\usepackage{epsfig} % for postscript graphics files
\usepackage{amsmath} % assumes amsmath package installed
\usepackage{amsthm}
\usepackage{amssymb}  % assumes amsmath package installed
\usepackage{subfigure}
%\usepackage{color} 
%\usepackage{cite}

%% Remark: autart has already defined the following items differently, that's the reason for the red alert
\newtheorem{theorem}{Theorem}%[section]
%[theorem]
%[section]
\newtheorem{remark}{Remark}%[section]
%[section]
\newtheorem{definition}{Definition}
\newtheorem{assumption}{Assumption}

\newcommand{\HEI}[1]{\bf Hybrid-EI}

%%%%%%%%%%%%%%%%%%%%%%%%%%%%%%%%%%%%%%%%%%%%%%%%%%%%%%%%%%%%%%%%%%%%%%%%%%%%%%%%%%%%%%%%%%%%%%%%%%%%%%%%$$$$$$$$$$$$$^^^^^^^^^^^^^^^^^^^^^^^^^^^^^^^^^^

\begin{document}

\begin{frontmatter}

\title{\textbf{Hybrid Event-Triggered and Impulsive Control for Time-Delay Systems}\tnoteref{label0}}
\tnotetext[label0]{The work of Kexue Zhang was supported by a fellowship from the Pacific Institute for the Mathematical Sciences (PIMS), Canada.}

%\author{Kexue Zhang\corref{cor1}}\ead{kexue.zhang@ucalgary.ca}
\author[cor1]{Kexue Zhang}\ead{kexue.zhang@ucalgary.ca}%\cortext[cor1]{Corresponding author.}
\address[cor1]{Department of Mathematics and Statistics, University of Calgary, Calgary, Alberta T2N 1N4, Canada}

%\author{Bahman Gharesifard\corref{cor2}}\ead{bahman.gharesifard@ucalgary.ca}
\author[cor2]{Bahman Gharesifard}\ead{bahman.gharesifard@ucalgary.ca}
\address[cor2]{Department of Mathematics and Statistics, Queen's University, Kingston, Ontario K7L 3N6, Canada}

\begin{abstract}
In this paper, we study the problem of hybrid event-triggered control for a class of nonlinear time-delay systems. Using a Razumikhin-type input-to-state stability result for time-delay systems, we design an event-triggered control algorithm to stabilize the given time-delay system. In order to exclude Zeno behavior, we combine the impulsive control mechanism with our event-triggered strategy. In this sense, the proposed algorithm is a hybrid impulsive and event-triggered strategy. Sufficient conditions for the stabilization of the nonlinear systems with time delay are obtained by using Lyapunov method and Razumikhin technique. Numerical simulations are provided to show the effectiveness of our theoretical results.
\end{abstract}

\begin{keyword}
%% keywords here, in the form: keyword \sep keyword
Event-triggered control \sep Zeno behavior \sep time-delay system \sep impulsive control \sep Razumikhin technique
\end{keyword}

\end{frontmatter}

%%
%% Start line numbering here if you want
%%
% \linenumbers

%% main text
\section{Introduction}\label{Sec1}
% The very first letter is a 2 line initial drop letter followed
% by the rest of the first word in caps.
% 
% form to use if the first word consists of a single letter:
% \IEEEPARstart{A}{demo} file is ....
% 
% form to use if you need the single drop letter followed by
% normal text (unknown if ever used by the IEEE):
% \IEEEPARstart{A}{}demo file is ....
% 
% Some journals put the first two words in caps:
% \IEEEPARstart{T}{his demo} file is ....
% 
% Here we have the typical use of a "T" for an initial drop letter
% and "HIS" in caps to complete the first word.
%\IEEEPARstart{I}{mpulsive} systems, delay-dependent impulses, ISS and research on impulsive systems, what we do in this paper and contribution, organization. 
% You must have at least 2 lines in the paragraph with the drop letter
% (should never be an issue)
%I wish you the best of success.
%
%\hfill mds
%% 
%\hfill August 26, 2015
%
%\subsection{Subsection Heading Here}
%Subsection text here.
%
%% needed in second column of first page if using \IEEEpubid
%%\IEEEpubidadjcol
%
%\subsubsection{Subsubsection Heading Here}
%Subsubsection text here.

{E}{vent-triggered} control strategies allow for updating the control inputs when an event, triggered by a certain event-triggering rule, occurs. Different from the conventional sampled-data control, the unpredictable sequence of event times are determined explicitly by the event-triggering rule. The event-triggering mechanism has the advantage of reducing the number of control input updates while still guaranteeing the underlying desired performance. Therefore, event-triggered control has been widely applied to various control problems, such as consensus problems, distributed optimization protocols, fault detection, and sensor scheduling (see, e.g., the survey papers~\cite{ZPJ-TFL:2015,WPMHH-KHJ-PT:2012} and references therein). 

Time delays are ubiquitous in many practical systems, and dynamical systems with time delay present in many fields (see, e.g. \cite{EF:2014} and references therein). Due to the advantages of event-triggered control in efficiency improvements and the significance of time-delay systems in modeling of the real-world phenomena, the study of event-triggered control strategies for time-delay systems is of great importance. The past few years have witnessed an increased interests in this area; a particular case is the delayed consensus problems of multi-agent systems which are normally described as linear time-delay systems. In~\cite{WZ-ZPJ:2015}, leader-following consensus of multi-agent systems with time delay is studied by employing event-triggered consensus protocols, and Zeno behavior (a phenomenon of infinite number of control updates over a finite time period) was successfully excluded from the this consensus problem. The idea of ruling out Zeno behavior introduced in~\cite{WZ-ZPJ:2015} has been applied to various consensus problems with time-delay, such as observer-based consensus~\cite{DZ-TD:2017} and consensus in stochastic settings~\cite{XT-JC-XL-AA:2017}. The event-triggered consensus protocols considered in~\cite{LL-DWCH-JL:2017,NM-XL-TH:2015} both require the explicit information of the time-delay since the updates of the control signal depend on the delayed state of each agent. Recently, the predictor-based control method has been combined with the event-triggered control mechanism for stabilization of control systems, in order to compensate the delay in the control inputs (see, e.g., \cite{AS-EF:2016-1,AS-EF:2016-2}). Event-triggered stabilization of nonlinear systems in the presence of sensing and actuation delays was studied in \cite{EN-PT-JC:2020}. The provided event-triggering rule requires the memory of system states at some historical moments, and some assumptions on the time-varying delays are essential for the design of the predictor-based event-triggered controllers. It should be pointed out that all the above mentioned results consider time delays in the control inputs, but the continuous dynamics of the uncontrolled systems are free of delays.

A main challenge in the design of event-triggered controllers for time-delay systems is to exclude Zeno behavior.  The well-known result introduced in~\cite{PT:2007} and references thereafter provide an efficient procedure to overcome Zeno behavior in event-triggered control systems, in the absence of delay. However, as expected, and also demonstrated in this paper, with a scalar linear time-delay system, the result cannot be directly applied to time-delay control systems. The literature on event-triggered control of nonlinear systems with time delay is surprisingly limited. We list here the few results in this direction. An event-triggered control scheme has been successfully designed for event-triggering stabilization of a class of nonlinear time-delay systems with discrete and distributed delays in \cite{SD-NM-JFGC:2014}. The proposed event-triggering rule and the feedback controller both require the full knowledge of the system delays, and such requirement plays an important role in ruling out Zeno behavior. Periodic event-triggered control of nonlinear time-delay systems was studied in~\cite{AB-PP-IDL-MDF:2021}. The proposed event-triggering algorithm allows periodic sampling of the systems states, and this guarantees the non-existence of Zeno behavior. Nevertheless, the conditions on the sampling period narrow down the applicable class of time-delay systems, and periodic event-triggered control potentially increases the number of control updates when compared with aperiodic one. The very recent work~\cite{KZ-BG-EB:2021} investigated the event-triggered control problem of nonlinear time-delay systems. { Most crucially related to our current work, the proposed strategy in~\cite{KZ-BG-EB:2021} rules out Zeno behavior, however, while boundedness and attractivity of the closed-loop systems under the proposed event-triggered control algorithm can be guaranteed, \emph{stability of the control systems may not be maintained}.}  

In this paper, we study the event-triggering stabilization problem of time-delay systems from hybrid control point of view. Coupled with the event-triggered control mechanism, we use impulsive controls to help rule out Zeno behavior. The mechanism of impulsive control is to use \textit{impulses}, which are state jumps or abrupt changes over a negligible time period, to achieve the desired performance of the closed-loop system. It has been proved to be powerful in the design and synthesis of control systems and has found applications in many areas, such as, secure communications and treatment of infectious disease (see, e.g., \cite{TY:2001,BMM-EYR:2012,IS-GTS:2016,XZ-PS:2017,ICM-SM-AG-AMG:2016} and the references therein). Our hybrid control algorithm works as follows: we first prescribe a minimal dwell time as the lower bound of each inter-execution time (the time between two successive control updates). If a certain measurement error becomes large enough at a time outside this dwell-time period, then we update the control input. Otherwise, the control signal will not be updated until the end of this dwell-time period, at which time we execute an impulsive control input and then update the feedback control signal. This newly proposed hybrid algorithm ensures the exclusion of Zeno behavior. By using Lyapunov function method and Razumikhin technique, we construct sufficient conditions on the impulse inputs and impulse moments to \emph{preserve the asymptotic stability of the corresponding closed-loop system} under this hybrid control algorithm. 

 The main contribution of this study is threefold. First, a novel hybrid event-triggered and impulsive control algorithm is proposed to asymptotically stabilize a type of nonlinear time-delay systems. Sufficient conditions on the tunable lower bound of inter-execution times are obtained which naturally rules out Zeno behavior. Secondly, without the impulsive control inputs in the proposed hybrid control scheme, it is shown that a particular linear  time-delay system under the event-triggered control algorithm exhibits Zeno behavior which demonstrates the necessity of the impulsive control strategy in our hybrid control algorithm. Lastly, we show that without the event-triggered control inputs, the impulsive control strategy is a special case of our hybrid control scheme, and our hybrid algorithm has the advantage that fewer control updates are triggered than the impulsive control method.

%\margin{I removed the organization. I think it is ok without ( a bit repetitive given the previous part) but we can place it back.}
%The rest of this paper is organized as follows. Sections~II contains some mathematical preliminaries. In Section~III, we introduce a hybrid event-triggered and impulsive control algorithm and conduct the stability analysis of the closed-loop system. In Section~IV, we investigate a linear hybrid control system to demonstrate our theoretical results. Our idea of future research is summarized in Section~V.

\section{Preliminaries}\label{Sec2}

Let $\mathbb{N}$ denote the set of positive integers, $\mathbb{R}$ the set of real numbers, $\mathbb{R}^+$ the set of nonnegative reals, and $\mathbb{R}^n$ the $n$-dimensional real space equipped with the Euclidean norm denoted by $\|\cdot\|$. For $a,b\in \mathbb{R}$ with $b>a$, let $\mathcal{PC}([a,b],\mathbb{R}^n)$ denote the set of piecewise right continuous functions $\varphi:[a,b]\rightarrow\mathbb{R}^n$, and $\mathcal{PC}([a,\infty),\mathbb{R}^n)$ the set of functions $\phi:[a,\infty)\rightarrow\mathbb{R}^n$ satisfying $\phi|_{[a,b]}\in \mathcal{PC}([a,b],\mathbb{R}^n)$ for all $b>a$, where $\phi|_{[a,b]}$ is a restriction of $\phi$ on interval $[a,b]$. For $x\in\mathcal{PC}([a,\infty),\mathbb{R}^n)$, define $\Delta x$ as $\Delta x(t):=x(t^+)-x(t^-)$, where $x(t^+)$ and $x(t^-)$ denote respectively the right- and left-hand limits of $x$ at $t$. Given $\tau>0$, the linear space $\mathcal{PC}([-\tau,0],\mathbb{R}^n)$ is equipped with a norm defined by $\|\varphi\|_{\tau}:=\sup_{s\in[-\tau,0]}\|\varphi(s)\|$ for $\varphi\in \mathcal{PC}([-\tau,0],\mathbb{R}^n)$. For the sake of simplicity, we use $\mathcal{PC}_{\tau}$ to represent $\mathcal{PC}([-\tau,0],\mathbb{R}^n)$ for the rest of this paper.
%\kexue{To indicate the dependency on $ \tau $, we use $\mathcal{PC}_{\tau}$ instead.} \margin{That is good. Could we change the next one right after to "For $\mathcal{PC}_{\tau}$"? Just to be consistent and not to go back to $ a $ again.} 

Consider the following control system with time delay:
\begin{eqnarray}\label{sys0}
\left\{\begin{array}{ll}
\dot{x}(t)=f(t,x_t)+Bu(t), \cr
x_{t_0}=\varphi,
\end{array}\right.
\end{eqnarray}
where $x(t)\in\mathbb{R}^n$ is the system state; $x_{t}$ is defined as $x_{t}(s)=x(t+s)$ for $s\in[-\tau,0]$, and $\tau>0$ is the maximum involved delay; $B\in\mathbb{R}^{n\times m}$ is the control gain; both $f:\mathbb{R}^+\times\mathcal{PC}_{\tau} \rightarrow\mathbb{R}^n$ and the control input $u:[t_0,\infty)\rightarrow\mathbb{R}^m$ satisfy $f(t,0)=u(0)=0$ so that system~\eqref{sys0} admits the trivial solution. In this study, we consider the hybrid impulsive control $u(t)=u_1(t)+u_2(t)$ with the state feedback control 
\begin{equation}\label{feedback}
u_1(t)=k(x(t)),
\end{equation}
where $k:\mathbb{R}^n\rightarrow \mathbb{R}^m$ is the feedback control law, and the impulsive control
\begin{equation}\label{impulsive}
u_2(t)=\sum^{\infty}_{i=1} g(x(t)) \delta(t-s_i),
\end{equation}
where $g:\mathbb{R}^n\rightarrow\mathbb{R}^m$ is the impulsive control law, $\delta(\cdot)$ denotes Delta dirac function, and the sequence of impulse times $\{s_i\}_{i\in\mathbb{N}}$ satisfies $0\leq t_0< s_i $, with $ s_i<s_j $ for $ i<j $, and $\lim_{i\rightarrow\infty}s_i=\infty$. Hence, the closed-loop system~\eqref{sys0} with~\eqref{feedback} and~\eqref{impulsive} can be rewritten as an impulsive system 
\begin{eqnarray}\label{sys}
\left\{\begin{array}{ll}
\dot{x}(t)=f(t,x_t)+Bk(x), ~~t\not=s_i,\cr
\Delta x(s_i)=B g(x(s^-_i)), ~~i\in\mathbb{N},\cr
x_{t_0}=\varphi.
\end{array}\right.
\end{eqnarray}
We refer the reader to~\cite[Chapter 4.1]{XL-KZ:2019} for a detailed discussion on transformation of control system~\eqref{sys0} into impulsive system~\eqref{sys}. Without loss of generality, we assume that $x$ is right-continuous at each impulse time. Define $\bar{f}(t,\phi)=f(t,\phi)+Bk(\phi(0))$, and then assume all the necessary conditions on $\bar{f}$ and $g$ in \cite{GB-XL:1999} hold so that, for any initial condition $\varphi\in\mathcal{PC}_{\tau}$, system~\eqref{sys} has a unique solution $x(t,t_0,\varphi)$ that exists in a maximal interval $[t_0-\tau,t_0+\Gamma)$, where $0<\Gamma\leq \infty$.

The notion of input-to-state stability (ISS), introduced by Sontag in \cite{EDS:1989}, has been proved powerful in the analysis and controller design of dynamical systems, especially in the design of event-triggered controllers (see, e.g. \cite{WZ-ZPJ:2015,PT:2007}). We introduce the following function classes before giving the formal ISS definition. A continuous function $\alpha:\mathbb{R}^+\rightarrow\mathbb{R}$ is said to be of class $\mathcal{K}$ and we write $\alpha\in\mathcal{K}$, if $\alpha$ is strictly increasing and $\alpha(0)=0$. If $\alpha$ is also unbounded, we say that $\alpha$ is of class $\mathcal{K}_{\infty}$ and we write $\alpha\in\mathcal{K}_{\infty}$. A continuous function $\beta:\mathbb{R}^+\times\mathbb{R}^+ \rightarrow\mathbb{R}^+$ is said to be of class $\mathcal{KL}$ and we write $\beta\in \mathcal{KL}$, if $\beta(\cdot,t)\in\mathcal{K}$ for each $t\in\mathbb{R}^+$ and $\beta(s,t)$ decreases to $0$ as $t\rightarrow \infty$ for each $s\in \mathbb{R}^+$. We are now in the position to state the ISS definition for system~\eqref{sys0}.

\begin{definition}
System~\eqref{sys0} is said to be input-to-state {stable} (ISS) with respect to input $u$, if there exist functions $\beta\in\mathcal{KL}$ and $\gamma\in\mathcal{K}_{\infty}$ such that, for each initial condition $\varphi\in\mathcal{PC}_{\tau}$ and input function $u\in\mathcal{PC}([t_0,\infty),\mathbb{R}^m)$, the corresponding solution to~\eqref{sys0} exists globally and satisfies
$$\|x(t)\|\leq \beta(\|\varphi\|_{\tau},t-t_0)+\gamma\Big(\sup_{s\in[t_0,t]}\|u(s)\|\Big), \mathrm{~for~all~} t\geq t_0.$$
\end{definition}
Next, we present several concepts regarding to Lyapunov functions and review a Razumikhin-type ISS result that will be used for the design of our event-triggered control algorithm. A function $V:\mathbb{R}^+\times\mathbb{R}^n\rightarrow \mathbb{R}^+$ is said to be of class $\nu_0$ and we write $V\in\nu_0$, if, for each $x\in\mathcal{PC}(\mathbb{R}^+,\mathbb{R}^n)$, the composite function $t\mapsto V(t,x(t))$ is in $\mathcal{PC}(\mathbb{R}^+,\mathbb{R}^+)$ and can be discontinuous at some $t'\in\mathbb{R}^+$ only when $t'$ is a discontinuity point of $x$. Given a function $V\in \nu_0$ and an input $u\in \mathcal{PC}([t_0,\infty),\mathbb{R}^m)$, the upper right-hand derivative $\mathrm{D}^+V$ of the Lyapunov function candidate $V$ with respect to system~\eqref{sys0} is defined as follows: 
\begin{equation}
\mathrm{D}^+V(t,\phi(0))=\limsup_{h\rightarrow 0^+} \frac{V(t+h,\phi(0)+h ({f}(t,\phi)+Bu) )-V(t,\phi(0))}{h} \nonumber
\end{equation}
for $\phi\in\mathcal{PC}_{\tau}$. We focus here on a special case in which no impulses are considered when system~\eqref{sys0} is free of the control; 
a more general form of system~\eqref{sys0} with external inputs and impulses is studied in~\cite{WHC-WXZ:2009}. The next ISS result, which we recall from~\cite{ART:1998},
is key in our work.

\begin{theorem}\label{Th.ISS}
Assume that there exist functions $V\in \nu_0$ and $\alpha_1, \alpha_2, \chi\in \mathcal{K}_{\infty}$, and constants $q>1$, $c>0$ such that, for all $t\in \mathbb{R}^+$, $x\in \mathbb{R}^n$ and $\phi\in \mathcal{PC}_{\tau}$, 
\begin{itemize}
\item[(i)] $\alpha_1(\|x\|)\leq V(t,x)\leq \alpha_2(\|x\|)$;

\item[(ii)] whenever $q V(t,\phi(0))\geq V(t+s,\phi(s))$ for all $s\in [-\tau,0]$,
\[
\mathrm{D}^+V(t,\phi(0)) \leq -c V(t,\phi(0)) +\chi(\|u\|).
\]
\end{itemize}
Then system~\eqref{sys0} is ISS.
\end{theorem}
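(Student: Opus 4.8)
The plan is to reduce the ISS estimate to a scalar comparison argument for the Lyapunov function along solutions, and to close that argument using the Razumikhin condition~(ii) in the form of a ``first-crossing'' estimate. First I would fix an admissible input $u$, write $v(t)=V(t,x(t))$, and note that since $V\in\nu_0$ and the solution $x$ of~\eqref{sys0} is (absolutely) continuous, $v$ is continuous on $[t_0-\tau,\infty)$, with $v(t_0+s)=V(t_0+s,\varphi(s))$ on the initial window. By~(i), $v(t_0+s)\le\alpha_2(\|\varphi\|_\tau)$ for all $s\in[-\tau,0]$, and it suffices to bound $v(t)$ by a $\mathcal{KL}$ term plus a $\mathcal{K}_\infty$-of-input term, since composing with $\alpha_1^{-1}$ and splitting the resulting maximum recovers the estimate in the ISS definition.

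Next I would build the comparison function. Set $d=\sup_{s\ge t_0}\|u(s)\|$ and $\Delta=\chi(d)/c$, and choose a decay rate $\lambda\in(0,c)$ small enough that $e^{\lambda\tau}\le q$; such $\lambda$ exists because $q>1$ and $\tau>0$. Define
\[
W(t)=\max\Bigl\{\alpha_2(\|\varphi\|_\tau)\,e^{-\lambda(t-t_0)},\ \tfrac{c}{c-\lambda}\Delta\Bigr\},
\]
where the constant branch $\frac{c}{c-\lambda}\Delta$ is chosen precisely so that $-c(W-\Delta)\le\dot W$ holds on both branches. The claim to be proved is $v(t)\le W(t)$ for all $t\ge t_0$.

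The heart of the proof is the crossing argument. Suppose the claim fails, and let $t_1$ be a first time at which $v(t_1)=W(t_1)$ with $v$ poised to exceed $W$. The key observation is that at $t_1$, for every $s\in[-\tau,0]$ one has $v(t_1+s)\le W(t_1+s)\le q\,v(t_1)$: on the exponential branch this uses $e^{\lambda\tau}\le q$, on the constant branch it uses $q\ge 1$, and for arguments $t_1+s$ in the initial window $[t_0-\tau,t_0]$ the bound $v(t_1+s)\le\alpha_2(\|\varphi\|_\tau)\le q\,v(t_1)$ again follows from $e^{\lambda\tau}\le q$ (since then $t_1<t_0+\tau$). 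Hence the Razumikhin premise $q\,v(t_1)\ge v(t_1+s)$ is met, condition~(ii) activates, and $\mathrm{D}^+v(t_1)\le -c\,v(t_1)+\chi(d)=-c\,(v(t_1)-\Delta)$. Comparing with $\dot W(t_1)$ (equal to $-\lambda W(t_1)$ on the exponential branch and $0$ on the constant branch) then shows $v$ decreases at least as fast as $W$ at $t_1$, contradicting the crossing. This yields $v(t)\le\max\{\alpha_2(\|\varphi\|_\tau)e^{-\lambda(t-t_0)},\ \chi(d)/(c-\lambda)\}$ for all $t\ge t_0$.

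To conclude, I would apply~(i) together with $\alpha_1^{-1}(a+b)\le\alpha_1^{-1}(2a)+\alpha_1^{-1}(2b)$ to set $\beta(r,t)=\alpha_1^{-1}(2\alpha_2(r)e^{-\lambda t})\in\mathcal{KL}$ and $\gamma(s)=\alpha_1^{-1}(2\chi(s)/(c-\lambda))\in\mathcal{K}_\infty$, which delivers the required bound. The causal form (with $\sup_{s\in[t_0,t]}$ in place of $\sup_{s\ge t_0}$) follows by rerunning the estimate on $[t_0,t]$ with $d$ replaced by $\sup_{s\in[t_0,t]}\|u(s)\|$, since the solution on $[t_0,t]$ depends only on $u|_{[t_0,t]}$; global existence ($\Gamma=\infty$) then follows from the a priori bound on $\|x\|$ obtained via $\alpha_1\in\mathcal{K}_\infty$. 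I expect the main obstacle to be making the first-crossing argument fully rigorous: handling the non-differentiability of $W$ at the junction of its two branches, verifying the Razumikhin premise across the initial window, and carrying out the Dini-derivative comparison with the $\limsup$ definition of $\mathrm{D}^+V$, i.e.\ establishing that $\mathrm{D}^+v(t_1)\le\dot W(t_1)$ genuinely prevents $v$ from crossing above $W$.
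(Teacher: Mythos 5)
The paper does not prove this statement at all: Theorem~\ref{Th.ISS} is imported verbatim from Teel's 1998 paper \cite{ART:1998} and used as a black box, so there is no internal proof to compare against. Your blind reconstruction is, in substance, the standard Lyapunov--Razumikhin proof of that cited result, and the outline is sound. The one place you flag as a worry --- closing the first-crossing contradiction --- does in fact close cleanly with your choice of comparison function: at a first crossing time $t_1$ one has $v(t_1)=W(t_1)\geq \tfrac{c}{c-\lambda}\Delta$, and the inequality $-c(v(t_1)-\Delta)\leq -\lambda v(t_1)$ is \emph{strict} whenever $v(t_1)>\tfrac{c}{c-\lambda}\Delta$ (exponential branch), while on the constant branch $\mathrm{D}^+v(t_1)\leq -\lambda v(t_1)<0=\dot W(t_1)$; either way $\mathrm{D}^+(v-W)(t_1)<0$, which contradicts $t_1=\inf\{t: v(t)>W(t)\}$. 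Your verification of the Razumikhin premise across the initial window and across the two branches of $W$ is also correct, since $W$ is nonincreasing and $e^{\lambda\tau}\leq q$. Two points deserve explicit care if this were written out in full: first, condition (ii) bounds the flow derivative $\mathrm{D}^+V(t,\phi(0))$ computed along the vector field, not the Dini derivative of $t\mapsto V(t,x(t))$; passing from one to the other requires local Lipschitz continuity of $V$ in $x$, which the paper's class $\nu_0$ does not formally guarantee but which is implicit in this literature. Second, the causality step (replacing $\sup_{s\geq t_0}$ by $\sup_{s\in[t_0,t]}$) should be run on each finite horizon before the continuation argument that gives $\Gamma=\infty$, exactly as you indicate. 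Neither point is a gap in the idea; both are standard.
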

It should be noted that Theorem~\ref{Th.ISS} is also true for the general nonlinear time-delay systems, though we stated a version of this theorem tuned to~\eqref{sys0}. We can conclude from Theorem~\ref{Th.ISS} that the global asymptotic stability (GAS) of system~\eqref{sys0} is guaranteed when $u=0$. The Razumikhin-type condition~(ii) in Theorem~\ref{Th.ISS} plays an essential role in the event-triggered controller design, as we demonstrate in the next section. 
\section{Hybrid Event-Triggered Control Algorithm}\label{ETC}

This section introduces an event-triggered feedback control algorithm with impulsive control strategy to asymptotically stabilize time-delay systems.  We propose a two-stage design of the hybrid event-triggered controller. First, we design the event-triggered implementation of control $u_1$ to stabilize the time-delay system in Subsection~\ref{ETC-A}. To ensure the exclusion of Zeno behavior with the proposed event-triggering scheme while still preserving the stability guarantee, a hybrid control algorithm with impulsive controls is provided to determine the impulse times in Subsection~\ref{ETC-B}, and then a stability result with sufficient conditions on the impulsive control input and impulse times is established in Subsection~\ref{ETC-C}.

\subsection{Event-Triggered Feedback Control}\label{ETC-A}

To design the event-triggered implementation of $u_1$, we consider system~\eqref{sys} with $g\equiv 0$ and state-feedback control as follows:
\begin{eqnarray}\label{ETC.sys}
\left\{\begin{array}{ll}
\dot{x}(t)=f(t,x_t)+Bu_1(t), \cr
u_1(t)=k(x(t_i)),~t\in[t_i,t_{i+1})\cr
x_{t_0}=\varphi,
\end{array}\right.
\end{eqnarray}
where $u_1\in \mathbb{R}^m$ is the feedback control input, and $k: \mathbb{R}^n\rightarrow \mathbb{R}^m$ is the feedback control law. The time sequence $\{t_i\}_{i\in\mathbb{N}}$ is implicitly defined by a certain execution rule to be determined later based on the measurement of system states, and each time instant $t_i$ corresponds to a control update $u_1(t_i)$. To be more specific, the controller $u_1$ samples the system states and updates its input signal both at each $t_i$ while remaining constant between two successive control updates.

Let us define the state measurement error by
\begin{equation}\label{error}
\mathbf{e}(t)=x(t_i)-x(t),
\end{equation}
for $t\in [t_i,t_{i+1})$ with $i\in\mathbb{N}$, and then rewrite
\begin{equation}\label{ETCer}
u_1(t)=k(x(t_i))=k(\mathbf{e}(t)+x(t)).
\end{equation}
Substituting~\eqref{ETCer} into system~\eqref{ETC.sys} gives the following closed-loop system:
\begin{eqnarray}\label{CL.sys}
\left\{\begin{array}{ll}
\dot{x}(t)=f(t,x_t)+B k(\mathbf{e}+x), \cr
x_{t_0}=\varphi.
\end{array}\right.
\end{eqnarray}
We make the following assumption on the control system~\eqref{CL.sys}.
\begin{assumption}\label{A1}
There exist functions $V\in \nu_0$ and $\alpha_1, \alpha_2, \chi\in \mathcal{K}_{\infty}$, and constants $q>1$, $c>0$ such that all the conditions of Theorem~\ref{Th.ISS} hold for system~\eqref{CL.sys} with input $u$ replaced with $\mathbf{e}$.
\end{assumption}
It can be seen from Theorem~\ref{Th.ISS} that Assumption~\ref{A1} guarantees that closed-loop system~\eqref{CL.sys} is ISS with respect to measurement error $\mathbf{e}$, and system~\eqref{CL.sys} is GAS provided $\mathbf{e}=0$. In this paper, we design an execution rule to determine the time sequence $\{t_i\}_{i\in\mathbb{N}}$ for the updates of the feedback control $u_1$ so that closed-loop system~\eqref{CL.sys} with measurement error $\mathbf{e}$ is still GAS. To do so, we restrict $\mathbf{e}$ to satisfy
\begin{equation}\label{exrule}
\chi(\|\mathbf{e}\|)\leq \sigma \alpha_1(\|x\|) \textrm{~for~ some~} \sigma>0,
\end{equation}
then the dynamics of $V$ is bounded by 
\begin{align}\label{DV}
\mathrm{D}^+V(t,x) \leq -cV(t,x) +\sigma \alpha_1(\|x\|)\leq -(c-\sigma)V(t,x)
\end{align} 
whenever $qV(t,x(t))\geq V(t+s,x(t+s))$ for all $s\in [-\tau,0]$. This guarantees the control system~\eqref{CL.sys} is GAS provided $\sigma<c$. The updating of the control input $u_1$ can be triggered by the execution rule (or event)
\begin{equation}\label{event}
\chi(\|\mathbf{e}\|) = \sigma \alpha_1(\|x\|).
\end{equation}
The event times are the instants when the event happens, that is, 
\begin{equation}\label{et.time}
t_{i+1}=\inf\{t\geq t_i \mid \chi(\|\mathbf{e}\|) = \sigma \alpha_1(\|x\|)\}.
\end{equation}
According to the control law in~\eqref{ETC.sys}, the control input is updated at each $t_i$ (the error $\mathbf{e}$ is set to zero simultaneously) and remains constant until the next event time $t_{i+1}$, and then the error $\mathbf{e}$ is reset to zero again. Therefore, the proposed event times in~\eqref{et.time} ensures the GAS of control system~\eqref{ETC.sys}.

Since the event times in~\eqref{et.time} are defined implicitly, it is essential to rule out the existence of Zeno behavior, which we define below for completeness. 
\begin{definition}[Zeno Behavior]
If there exists $T>0$ such that $t_l\leq T$ for all $l\in \mathbb{N}$, then system~\eqref{ETC.sys} is said to exhibit Zeno behavior.
\end{definition}

It worth mentioning that for control systems without time-delay, a well-known result in \cite{PT:2007} says that if the functions $f$ and $k$ in~\eqref{ETC.sys} are Lipschitz continuous on compact sets, then it is possible to exclude Zeno behavior (some extra conditions are required for a definite exclusion, see \cite{PT:2007} for more details). However, this is not true for event-triggered control systems with time-delay. { Even though this is to be expected,} an example is discussed with numerical simulations in Section~\ref{Sec4} to illustrate this statement.

\subsection{Excluding Zeno Behavior via Impulsive Control}\label{ETC-B}

As we demonstrated in Section~\ref{Sec4}, a linear control system with time-delay can exhibit Zeno behavior under natural event-triggered control strategies. Our next objective is to show that one can still use event-triggered strategies for system~\eqref{sys}, as long as they are paired with impulsive control $u_2$, designed to exclude Zeno behaviors. 

To proceed, let us define a sequence of event-time candidates
\begin{equation}\label{ET-candidate}
\bar{t}_{i+1}=\inf\{t\geq t_i \mid \chi(\|\mathbf{e}\|)=\sigma \alpha_1(\|x\|)\},
\end{equation}
where the sequence of event times $\{t_i\}_{i\in\mathbb{N}}$ is to be determined with a lower bound $h>0$ of the inter-execution time $\inf_{i\in \mathbb{N}}\{t_{i+1}-t_i\}$ according to the execution rule displayed at \textbf{Hybrid-EI} below. 
\begin{table}[h]
\hskip-3mm\rule{\textwidth}{1.pt}
\vskip-0mm
\hskip-3mm\textbf{Hybrid Event-triggered/Impulsive Strategy (Hybrid-EI)}
\vskip-1.5mm
\hskip-3mm\rule{\textwidth}{1.pt}
\vskip-1mm
\begin{itemize}
%\vskip-1mm
\item[1.] If $\bar{t}_{i+1}-t_i>h$, then let $t_{i+1}=\bar{t}_{i+1}$ and update the feedback control signal $u_1$ at $t=t_{i+1}$.
\bigskip
\item[2.] If $\bar{t}_{i+1}-t_i\leq h$, then activate an impulse input $\Delta x(t)=B g(x(t))$ at $t=t_i+h$. Let $t_{i+1}=t_i+h$ and update the control $u_1$ at $t=t_{i+1}$ after the state jump.
\end{itemize}
\vskip-1mm
\hskip-3mm\rule{\textwidth}{1.pt}
\end{table}
%\begin{itemize}
%\item if $\bar{t}_{i+1}-t_i>h$, then let $t_{i+1}=\bar{t}_{i+1}$ and update the control input signal at $t=t_{i+1}$;
%
%\item if $\bar{t}_{i+1}-t_i\leq h$, then activate an impulse (state jump) $\Delta x(t)=I(t,x)$ at $t=t_i+h$ where $I:\mathbb{R}^+\times \mathbb{R}^n\rightarrow \mathbb{R}^n$ regulates the state jump. Let $t_{i+1}=t_i+h$ and update the control input at $t=t_{i+1}$.
%\end{itemize}

It can be seen from \textbf{\HEI} that the inter-execution times $\{t_{i+1}-t_i\}_{i\in\mathbb{N}}$ are lower bounded by $h$, that is, $t_{i+1}-t_i\geq h$ for all $i\in \mathbb{N}$. This excludes Zeno behavior. The closed-loop system can be written as the following impulsive system:
\begin{eqnarray}\label{isys}
\left\{\begin{array}{ll}
\dot{x}(t)=f(t,x_t)+Bu_1(t), \cr
u_1(t)=k(x(t_i)),~t\in[t_i,t_{i+1})\cr
\Delta x(t_{i+1})=B g(x(t^-_{i+1})),~\textrm{~if~} t_{i+1}=t_i+h \cr 
x_{t_0}=\varphi.
\end{array}\right.
\end{eqnarray}
If $t_{i+1}=t_i+h$, then $t_{i+1}$ is an impulse time. We assume that system~\eqref{isys} satisfies all the necessary conditions in \cite{GB-XL:1999} so that for any initial condition $\varphi\in\mathcal{PC}_{\tau}$, system $\eqref{isys}$ has a unique global solution $x(t,t_0,\varphi)$.

\subsection{Stability Analysis}\label{ETC-C}

To ensure the asymptotic stability of system~\eqref{isys} with the proposed hybrid event-triggered control algorithm, the next theorem presents several sufficient conditions on the continuous dynamics of system~\eqref{isys}, the impulses, the lower bound of inter-execution times, and the relation among them.

\begin{theorem}\label{Th.main}
Suppose that assumption~\ref{A1} holds with $V\in \nu_0$, $q>1$ and $c>0$. For some $h>0$, the event times $\{t_i\}_{i\in\mathbb{N}}$ are defined according to \textbf{\HEI} with the event-time candidates given in~\eqref{ET-candidate} and positive constant $\sigma<c$. If $t_{i+1}=t_i+h$, we further assume that there exist positive constants $\bar{c}$ and $\rho$ such that the following conditions are satisfied
\begin{itemize}
\item[(i)] for $t\in [t_i,t_{i+1})$, $
\mathrm{D}^+V(t,x) \leq \bar{c} V(t,x)
$,
whenever $qV(t,x(t))\geq V(t+s,x(t+s))$, for all $s\in [-\tau,0]$;

\item[(ii)] $V(t_{i+1},x(t^-_{i+1})+B g(x(t^-_{i+1})))\leq \rho V(t^-_{i+1},x(t^-_{i+1}));$

\item[(iii)] $q>\frac{1}{\rho}>e^{\bar{c}h}$.

\end{itemize}
Then the closed-loop system~\eqref{isys} is GAS.

\end{theorem}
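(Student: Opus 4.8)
The plan is to follow the evolution of the scalar $W(t) := V(t,x(t))$ and show $W(t)\to 0$; by condition~(i) of Theorem~\ref{Th.ISS}, i.e.\ $\alpha_1(\|x\|)\le V\le \alpha_2(\|x\|)$, this forces $\|x(t)\|\to 0$, and together with a uniform bound it yields GAS. The key structural observation is that \textbf{Hybrid-EI} splits $[t_0,\infty)$ into inter-execution intervals $[t_i,t_{i+1})$ of exactly two kinds. On an \emph{event-triggered} interval ($t_{i+1}=\bar t_{i+1}>t_i+h$), the defining property of $\bar t_{i+1}$ in~\eqref{ET-candidate} guarantees $\chi(\|\mathbf e(t)\|)\le\sigma\alpha_1(\|x(t)\|)$ for all $t$ in the interval, so~\eqref{DV} supplies the decay estimate $\mathrm D^+W(t)\le-(c-\sigma)W(t)$ whenever the Razumikhin inequality $qW(t)\ge W(t+s)$, $s\in[-\tau,0]$, holds; set $\lambda:=c-\sigma>0$. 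On an \emph{impulsive} interval ($t_{i+1}=t_i+h$) only condition~(i) is available, giving the growth bound $\mathrm D^+W(t)\le\bar c W(t)$ under the same Razumikhin inequality, followed by the jump $W(t_{i+1})\le\rho\,W(t_{i+1}^-)$ from condition~(ii). Writing $d:=\rho\,e^{\bar c h}$, condition~(iii) yields $d<1$, so an impulsive interval is net-contractive while an event-triggered interval is decaying; the whole task is to turn these per-interval facts into global decay despite the delay.

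First I would establish Lyapunov stability by proving a uniform Razumikhin bound of the form $W(t)\le q\,\bar W(t_0)$ for all $t\ge t_0$, where $\bar W(t_0):=\sup_{s\in[-\tau,0]}W(t_0+s)$. The device is to assume this bound is first violated at some $t^\star$ and inspect the active regime. If $t^\star$ lies in a decay interval, the classical Razumikhin contradiction applies: at an upward crossing the inequality $qW(t^\star)\ge W(t^\star+s)$ holds automatically because $q>1$, so $\mathrm D^+W(t^\star)\le-\lambda W(t^\star)<0$, contradicting the crossing. If $t^\star$ lies in an impulsive interval, the growth is limited to a factor $e^{\bar c h}$ over the length-$h$ interval, and the chain $e^{\bar c h}<1/\rho<q$ from~(iii) shows $W$ cannot reach $q\,\bar W(t_0)$ before the stabilizing jump. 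A short induction confirms that the interval-start values $W(t_i)$ never exceed $\bar W(t_0)$, closing the bootstrap.

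Next I would prove attractivity by exhibiting a geometric contraction of the windowed supremum $\bar W(t):=\sup_{t-\tau\le\theta\le t}W(\theta)$. Each impulsive interval multiplies $W$ by at most $d<1$, while each event-triggered interval drives $W$ down through $\mathrm D^+W\le-\lambda W$ for as long as the Razumikhin inequality holds; crucially, the compatibility condition $q>1/\rho$ guarantees that immediately after a jump the Razumikhin inequality remains valid (the pre-jump value exceeds the post-jump value by at most $1/\rho<q$), so the gain of each impulse is locked in rather than undone. Following the standard Razumikhin asymptotic argument as in Theorem~\ref{Th.ISS}, I would show that over any window of length exceeding $\tau$ the supremum $\bar W$ strictly contracts by a fixed factor $\kappa\in(0,1)$ depending only on $\lambda,\bar c,\rho,h,\tau,q$; since $t_{i+1}-t_i\ge h$ bounds the number of intervals per unit time, iterating gives $\bar W(t)\to 0$, hence $W(t)\to 0$ and $\|x(t)\|\to 0$, which with the stability bound yields GAS of~\eqref{isys}.

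The main obstacle will be the Razumikhin bookkeeping across the two heterogeneous regimes and across the impulse jumps: one must certify, at every instant where a differential inequality is invoked, that the Razumikhin inequality genuinely holds, and confirm that when it fails $W(t)$ is automatically small (bounded by $\tfrac1q\bar W(t)$), so that the transient growth on impulsive intervals together with the jump factor $\rho$ never outweighs the decay accumulated on event-triggered intervals. Extracting a contraction factor $\kappa<1$ that is uniform over all admissible orderings of the two interval types is the delicate quantitative step, and it is exactly here that the full chain $q>1/\rho>e^{\bar c h}$ of condition~(iii) is consumed.
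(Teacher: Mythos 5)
Your strategy is sound and would lead to a correct proof, but it takes a genuinely different route from the paper. You propose the classical two-stage Razumikhin argument: first a uniform bound $V(t,x(t))\leq q\,\sup_{s\in[-\tau,0]}V(t_0+s,x(t_0+s))$ obtained by a first-violation contradiction, and then attractivity via a geometric contraction of the windowed supremum $\sup_{\theta\in[t-\tau,t]}V(\theta,x(\theta))$, with the slack in condition (iii) consumed through $\rho e^{\bar c h}<1$ and $e^{\bar c h}<q$. The paper instead collapses both stages into a single induction over the intervals $[t_i,t_{i+1})$ applied to the \emph{exponentially weighted} function $w(t)=e^{\lambda(t-t_0)}V(t,x(t))$, where $\lambda>0$ is chosen small enough (together with an auxiliary $\varepsilon>0$) that the strict inequalities in (iii) survive the weighting, i.e.\ $q\geq e^{\lambda\tau}/\rho>1/\rho>e^{(\bar c+\lambda)h}$; proving $w$ uniformly bounded then yields exponential decay of $V$ outright and sidesteps the step you correctly identify as delicate, namely extracting a contraction factor $\kappa<1$ uniform over all admissible orderings of event-triggered and impulsive intervals. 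The per-interval Razumikhin bookkeeping (last-crossing times, growth capped at $e^{\bar c h}$ on impulsive intervals, decay at rate $c-\sigma$ on event-triggered ones) is essentially identical in both arguments. One imprecision worth fixing: you assert that after a jump ``the pre-jump value exceeds the post-jump value by at most $1/\rho<q$,'' but condition (ii) only bounds the post-jump value from above, so the pre-jump value is at \emph{least} $1/\rho$ times the post-jump value. What the inequality $q>1/\rho$ actually buys---and how the paper uses it---is that the historical \emph{peak level} on the preceding interval is at most $1/\rho$ times the post-jump \emph{level}, so the Razumikhin premise $qV(t,x(t))\geq V(t+s,x(t+s))$ can be certified on the sub-interval where $V$ sits between the post-jump level and the peak level; when $V$ drops below the post-jump level the premise may fail, but there $V$ is already small and no differential inequality is needed. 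With that correction your plan goes through.
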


\begin{proof}
Condition (iii) implies that there exists a small enough $\varepsilon>0$ such that $\frac{1}{\rho}>\frac{1}{\rho+\varepsilon}>e^{\bar{c}h}$. We then can find a positive constant $\lambda$ close to zero and $\lambda\leq c-\sigma$ so that both
\[
q\geq \frac{e^{\lambda\tau}}{\rho}>\frac{1}{\rho}>e^{(\bar{c}+\lambda)h}
\textrm{~and~}
q > \frac{e^{\lambda\tau}}{\rho+\varepsilon}>\frac{1}{\rho+\varepsilon}>e^{(\bar{c}+\lambda)h}
\]
are satisfied. Let $M= qe^{-\lambda\tau}$, then we have $M>1$ and $(\rho+\varepsilon)M>1$. Let $v(t)=V(t,x(t))$ and define $w(t)=e^{\lambda(t-t_0)}v(t)$ for $t\geq t_0-\tau$. By induction, we will show that, for $t\in[t_i,t_{i+1})$,
\begin{eqnarray}\label{induction}
w(t)\leq
\left\{\begin{array}{ll}
(\rho+\varepsilon)M\alpha_2(\|\varphi\|_{\tau}), \textrm{~if~} t_{i+1}>t_i+h\cr
M\alpha_2(\|\varphi\|_{\tau}), \textrm{~if~} t_{i+1}=t_i+h
\end{array}\right.
\end{eqnarray}
For $s\in [-\tau,0]$, we have 
\begin{align*}
w(t_0+s)=e^{\lambda(t_0+s-\tau)}v(t_0+s)\leq v(t_0+s)\leq \alpha_2(\|\varphi\|_{\tau}) <(\rho+\varepsilon)M\alpha_2(\|\varphi\|_{\tau}).
\end{align*} 
Therefore,~\eqref{induction} is true on $[t_0-\tau,t_0]$. We now prove~\eqref{induction} holds on $[t_0,t_1)$. To do this, we consider the following two cases.

\underline{Case I:} $t_1>t_0+h$. We will show that 
\begin{equation}\label{indcution1}
w(t)\leq (\rho+\varepsilon)M\alpha_2(\|\varphi\|_{\tau})
\end{equation}
is true for $t\in[t_0,t_1)$. We do this by contradiction. Suppose~\eqref{indcution1} is not true on $[t_0,t_1)$, then there exists some $t\in (t_0,t_1)$ so that $w(t)> (\rho+\varepsilon)M\alpha_2(\|\varphi\|_{\tau})$.
Define now 
\[
t^*=\inf\{t\in (t_0,t_1) \mid w(t)> (\rho+\varepsilon)M\alpha_2(\|\varphi\|_{\tau}) \}.
\] 
By the continuity of $w$, we conclude that 
\[
w(t^*)= (\rho+\varepsilon)M\alpha_2(\|\varphi\|_{\tau})
\textrm{~and~} 
w(t)\leq (\rho+\varepsilon)M\alpha_2(\|\varphi\|_{\tau})
\]
for all $t\in [t_0,t^*]$. We next define 
\[
t^{**}=\sup\{t\in[t_0,t^*]\mid w(t)\leq \alpha_2(\|\varphi\|_{\tau})\}.
\] 
Since $w(t_0)\leq \alpha_2(\|\varphi\|_{\tau})$ and $w(t^*)> \alpha_2(\|\varphi\|_{\tau})$, we conclude that 
\[
w(t^{**})=\alpha_2(\|\varphi\|_{\tau}) \textrm{~and~} \alpha_2(\|\varphi\|_{\tau})\leq w(t)\leq (\rho+\varepsilon)M\alpha_2(\|\varphi\|_{\tau})
\] 
for $t\in[t^{**},t^{*}]$. Thus, for any $t\in[t^{**},t^{*}]$, we have $t+s\leq t^*$ for all $s\in [-\tau,0]$ and 
\[
w(t+s)\leq (\rho+\varepsilon)M\alpha_2(\|\varphi\|_{\tau}) \leq (\rho+\varepsilon)M w(t)
\]
which implies
\begin{align}\label{precon1}
v(t+s) < (\rho+\varepsilon)M w(t) e^{-\lambda(t+s-t_0)} < (\rho+\varepsilon)M e^{\lambda\tau} v(t) < q v(t).
\end{align}
Here, we used the facts that $(\rho+\varepsilon)M>1$, $\rho+\varepsilon<1$ and $M=q e^{-\lambda\tau}$. We then can conclude from condition (ii) of Theorem~\ref{Th.ISS} and~\eqref{DV} that for $t\in[t^{**},t^{*}]$
\begin{align*}
\mathrm{D}^+ w(t) = \lambda e^{\lambda(t-t_0)} v(t) +e^{\lambda(t-t_0)}\mathrm{D}^+ v(t) = (\lambda-c+\sigma) w(t)\leq 0.
\end{align*}
This indicates that $w(t)$ is nonincreasing on $[t^{**},t^{*}]$ and then $w(t^{**})\geq w(t^*)$ which is a contradiction to the definitions of $t^*$ and $t^{**}$. Hence,~\eqref{indcution1} is true on $[t_0,t_1)$ for Case~I.

\underline{Case II:} $t_1=t_0+h$. We will show that 
\begin{equation}\label{indcution2}
w(t)\leq M\alpha_2(\|\varphi\|_{\tau})
\end{equation}
holds on $[t_0,t_1)$ by contradiction. Assume that~\eqref{indcution2} does not hold, then there exists some $t\in[t_0,t_1)$ such that $w(t)> M\alpha_2(\|\varphi\|_{\tau})$, and we define 
\[
\bar{t}=\inf\{t\in[t_0,t_1) \mid w(t)> M\alpha_2(\|\varphi\|_{\tau})\}. 
\]
It follows from the continuity of $w$ on $[t_0,t_1)$ that $w(\bar{t})=M\alpha_2(\|\varphi\|_{\tau})$ and $w(t)\leq M\alpha_2(\|\varphi\|_{\tau})$ for all $t\in [t_0,\bar{t}]$. Since $w(t_0)\leq \alpha_2(\|\varphi\|_{\tau})$, there exists a $t\in [t_0,\bar{t})$ so that $w(t)>\alpha_2(\|\varphi\|_{\tau})$. Let 
\[
\tilde{t}=\sup\{t\in[t_0,\bar{t})\mid w(t)\leq \alpha_2(\|\varphi\|_{\tau})\}.
\]
Then we have $w(\tilde{t})=\alpha_2(\|\varphi\|_{\tau})$ and $\alpha_2(\|\varphi\|_{\tau})\leq w(t)\leq M \alpha_2(\|\varphi\|_{\tau})$, for all $t\in [\tilde{t},\bar{t}]$. Therefore, for $t\in [\tilde{t},\bar{t}]$, we have $w(t+s)\leq M \alpha_2(\|\varphi\|_{\tau})\leq M w(t)$ for all $s\in[-\tau,0]$, which then implies $v(t+s)\leq e^{-\lambda(t+s-t_0)}Mw(t) =  M e^{-\lambda s}v(t)\leq q v(t)$.

Condition (ii) states that
\begin{align}\label{precon2}
\mathrm{D}^+w(t) = \lambda e^{\lambda(t-t_0)} v(t) +e^{\lambda(t-t_0)} \mathrm{D}^+v(t) \leq \lambda w(t) +\bar{c} e^{\lambda(t-t_0)}v(t) = (\lambda+\bar{c}) w(t) 
\end{align}
for all $t\in[\tilde{t},\bar{t}]$, then $w(\bar{t})\leq w(\tilde{t})e^{(\lambda+\bar{c})(\bar{t}-\tilde{t})}\leq w(\tilde{t})e^{(\lambda+\bar{c})h}$, in which we used $\bar{t}-\tilde{t}<h$. From the definition of $\tilde{t}$ and the fact $e^{(\lambda+\bar{c})h}< q e^{-\lambda\tau}=M$, we get $w(\bar{t})< M \alpha_2(\|\varphi\|_{\tau})$. This produces a contradiction to the definition of $\bar{t}$. Therefore,~\eqref{indcution2} is true on $[t_0,t_1)$ for this case. We hence conclude from the above two cases that~\eqref{induction} holds for $t\in [t_0,t_1)$.

Now suppose~\eqref{induction} holds on $[t_0,t_m)$ where $m\geq 1$, and we next prove that~\eqref{induction} is still true for $t\in [t_m,t_{m+1})$. Similar to the above discussion, we consider two scenarios.

\underline{Case I':} $t_{m+1}>t_m+h$. We have $w(t)\leq M \alpha_2(\|\varphi\|_{\tau})$ for all $t< t_m$ from~\eqref{induction}, and will prove~\eqref{indcution1} holds on $[t_m,t_{m+1})$. For $t=t_m$, we have 
\[
w(t_m)\leq \rho w(t^-_m)\leq \rho M \alpha_2(\|\varphi\|_{\tau})< (\rho+\varepsilon)M \alpha_2(\|\varphi\|_{\tau}),
\] 
that is,~\eqref{indcution1} is true for $t=t_m$. We next show~\eqref{indcution1} is true on $(t_m,t_{m+1})$ by a contradiction argument. Suppose~\eqref{indcution1} does not hold, then we can find a $t\in (t_m,t_{m+1})$ so that $w(t)>(\rho+\varepsilon)M \alpha_2(\|\varphi\|_{\tau}) $. To proceed, we define 
\[
t_*=\inf\{t\in (t_m,t_{m+1}) \mid w(t)>(\rho+\varepsilon)M \alpha_2(\|\varphi\|_{\tau})\}.
\]
Using the facts that $w(t_m)<(\rho+\varepsilon)M \alpha_2(\|\varphi\|_{\tau})$ and the continuity of $w$ on $[t_m,t_{m+1})$, we conclude
\[
w(t_*)=(\rho+\varepsilon)M \alpha_2(\|\varphi\|_{\tau}) 
\textrm{~and~} 
w(t)\leq (\rho+\varepsilon)M \alpha_2(\|\varphi\|_{\tau})
\] 
on $[t_m,t_*]$. We further define 
\[
t_{**}=\sup\{t\in [t_m,t_*)\mid w(t)\leq \rho M \alpha_2(\|\varphi\|_{\tau})\}.
\] 
Since $w(t_m)\leq \rho M \alpha_2(\|\varphi\|_{\tau})$ and $w(t_*)>\rho M \alpha_2(\|\varphi\|_{\tau})$, we conclude from the continuity of $w$ that $w(t_{**})=\rho M \alpha_2(\|\varphi\|_{\tau})$
and 
\[
\rho M \alpha_2(\|\varphi\|_{\tau})\leq w(t)\leq (\rho+\varepsilon)M \alpha_2(\|\varphi\|_{\tau})
\] 
on $[t_{**},t_*]$. For $s\in [-\tau,0]$, we have $t+s\leq t_*$ when $t\in [t_{**},t_*]$, then $w(t+s)\leq M \alpha_2(\|\varphi\|_{\tau})$. Therefore, $w(t+s)\leq \frac{1}{\rho}w(t)$ for $t\in [t_{**},t_*]$, which implies 
\begin{align*}
v(t+s) = w(t+s) e^{-\lambda(t+s-t_0)} \leq \frac{1}{\rho} w(t) e^{-\lambda(t-t_0)}e^{-\lambda s} \leq \frac{e^{\lambda \tau}}{\rho} v(t),
\end{align*}
and then we conclude from the fact $\frac{e^{\lambda \tau}}{\rho} \leq q$ that $v(t+s)\leq q v(t)$. Similar to Case~I, we can derive the contradiction: $w(t_*)\leq w(t_{**})$. Thus, we conclude that~\eqref{indcution1} is true on $[t_m,t_{m+1})$ for Case I'.

\underline{Case II':} $t_{m+1}=t_m+h$. We will show~\eqref{indcution2} holds on $[t_m,t_{m+1})$. Based on the assumption that \eqref{induction} holds on $[t_0,t_m)$,
%\kexue{This is the assumption we made for mathematical reduction on $[t_0,t_{m})$. Here we are going to show~\eqref{indcution2} holds on $[t_m,t_{m+1})$ with Case II'.}, 
we have $w(t)\leq M \alpha_2(\|\varphi\|_{\tau})$ for $t\in[t_0,t_m)$. When $t=t_m$, it follows from condition (ii) that 
\[
w(t_m)\leq \rho w(t^-_m)\leq \rho M \alpha_2(\|\varphi\|_{\tau})< M \alpha_2(\|\varphi\|_{\tau})
\] 
which implies~\eqref{indcution2} is true at $t=t_m$. We next will prove~\eqref{indcution2} holds on $(t_m,t_{m+1})$ by contradiction. Suppose there exist a $t\in (t_m,t_{m+1})$ so that $w(t)> M \alpha_2(\|\varphi\|_{\tau})$. We define 
\[
\underline{t}=\inf\{t\in(t_m,t_{m+1})\mid w(t)>M \alpha_2(\|\varphi\|_{\tau})\}. 
\]
The continuity of $w$ yields that $w(\underline{t})=M \alpha_2(\|\varphi\|_{\tau})$ and $w(t)\leq M \alpha_2(\|\varphi\|_{\tau})$, for all $t\leq \underline{t}$. Let 
\[
\hat{t}=\sup\{t\in[t_m,\underline{t})\mid w(t)\leq \rho M \alpha_2(\|\varphi\|_{\tau})\}. 
\] 
Then the facts that $w(t_m)\leq \rho M \alpha_2(\|\varphi\|_{\tau})$ and $w(\underline{t})> \rho M \alpha_2(\|\varphi\|_{\tau})$ imply that $w(\hat{t})=\rho M \alpha_2(\|\varphi\|_{\tau})$ and $\ w(t)\geq \rho M \alpha_2(\|\varphi\|_{\tau})$ on $[\hat{t},\underline{t}]$. Therefore, for $s\in[-\tau,0]$ and $t\in[\hat{t},\underline{t}]$, it follows that 
\[
w(t+s)\leq M \alpha_2(\|\varphi\|_{\tau})\leq \frac{1}{\rho}w(t)\leq M w(t).
\] 
By an argument similar to the one in Case~II, we have that $v(t+s)\leq qv(t)$ for $s\in [-\tau,0]$; therefore,~\eqref{precon2} holds on $[\hat{t},\underline{t}]$ that implies $w(\underline{t})< M \alpha_2(\|\varphi\|_{\tau})$. This is a contradiction with the definition of $\underline{t}$. Hence,~\eqref{indcution2} is true on $(t_m,t_{m+1})$. This completes the induction proof for all $t\geq t_0$. Therefore, $w(t)\leq M\alpha_2(\|\varphi\|_{\tau})$ on $[t_0,\infty)$, as claimed and the global asymptotic stability of closed-loop system~\eqref{isys} hence follows. 
\end{proof}

%\margin{I added one of the remarks as part of the text, and added a part to the next one. Please check.}
Some remarks are in order. Enforcing a lower bound of the inter-execution times in \textbf{\HEI} is similar to the mechanism of time-regularized event-triggered control, where the event triggering condition is only verified after some pre-specified time period has elapsed (see, \cite{HY-FH-TC:2019} and the references therein), leading to conservative criteria on this pre-specified time period. However, our hybrid control algorithm has the advantage that when $t_{i+1}$ is an impulse time, the Razumikhin-type condition~\eqref{DV} may potentially fail to hold on $[t_i,t_{i+1})$, leading to a less conservative criterion, with the caveat that we could possibly have no stability certificate for~\eqref{isys} without a careful tuning of the impulses. Razumikhin-type condition (i) characterizes such continuous dynamics with positive constant $\bar{c}$ indicating the destabilizing effects on $[t_i,t_{i+1})$. Condition (ii) describes the jump that occurs for the Lyapunov function by each stabilizing impulse input, quantified by constant $\rho$. The inequality in condition (iii) balances the continuous dynamics and the impulses so that the closed-loop system~\eqref{isys} is GAS. It can be seen that constant $\rho$ is upper bounded by a quantity depending on both $\bar{c}$ and $h$. Increasing $h$ allows the destabilizing continuous dynamics to evolves over a prolonged time period, and then reduces the upper bound of $\rho$ which corresponds to enlarging the jump of Lyapunov function $V$ at impulse time $t_{i+1}$.

\begin{remark}\label{remark.impulse} 
Compared with the existing results on event-triggered control of time-delay systems mentioned in Section~\ref{Sec1}, the event times determined by \textbf{\HEI} are aperiodic, time delays are considered in the continuous dynamics of the uncontrolled systems. Furthermore, \textbf{\HEI} does not require the knowledge of these system delays other than the boundedness. More specifically, Razumikhin condition (ii) in Theorem~\ref{Th.main} only requires the system delays to be bounded, but the exact bound $\tau$ is not essential for the design of the impulsive control (see Subsection~\ref{subsecIV-B} for a demonstration). Nevertheless, $\tau$ is closely related to the exponential convergence rate of the Lyapunov function (see the selection of $\lambda$ in the above proof).
Finally, without the state-feedback control $u$, the proposed hybrid control scheme reduces to the impulsive control method, that is, $t_{i+1}-t_i=h$ for all $i\in \mathbb{N}$, and $\{t_i\}_{i\in\mathbb{N}}$ is the impulse time sequence. Theorem~\ref{Th.main} provides an impulsive stabilization criterion on impulsive system~\eqref{isys} with $u_1\equiv 0$. Similar Razumikhin-type stability results can be found in \cite{SP-YZ:2010} for stochastic impulsive systems. Compared with the impulsive control strategy, the advantage of our hybrid control algorithm is that the number of impulse times can be reduced significantly due to the intervention of the event-triggered controller, and far fewer control updates are triggered. See Section~\ref{Sec4} for a demonstration with numerical simulations in Fig.~\ref{fig5}.
\end{remark}
\section{An Illustrative Example.}\label{Sec4}

To demonstrate Theorem~\ref{Th.main} with \textbf{\HEI}, we study the following hybrid control system with time-delay
\begin{eqnarray}\label{linear.sys}
\left\{\begin{array}{ll}
\dot{x}(t)=b x(t-r)+u_1(t), ~t\not=s_i \cr
\Delta x(s_i)=\beta x(s^-_i), ~i\in\mathbb{N} \cr
x_{t_0}=\phi,
\end{array}\right.
\end{eqnarray}
where state $x\in \mathbb{R}$, $\phi(s)=1$ for $s\in [-r,0]$, $r=16$, $b=-0.1$, and $u_1(t)=kx(t)$ is the state feedback control with $k=-0.2$. Constant $\beta$ is the impulsive control gain, and impulse time sequence $\{s_i\}_{i\in\mathbb{N}}$ is to be determined according to our hybrid control algorithm. By considering Lyapunov function $V(x)=x^2$, it can be derived from Theorem~\ref{Th.ISS} that control system~\eqref{linear.sys} without impulses (or with $\beta =0$) is GAS. However, system~\eqref{linear.sys} is unstable without all inputs (i.e., $u_1=\beta=0$), and see Fig.~\ref{unstable} for the trajectory of such a system. This instability can be readily verified by the numerical simulation, which is omitted due to the space limitation. The detailed stability and instability analysis of such systems can be found in \cite{HM-SSK:2020} and the references therein.

\begin{figure}[!t]\centering
\includegraphics[width=3.2in]{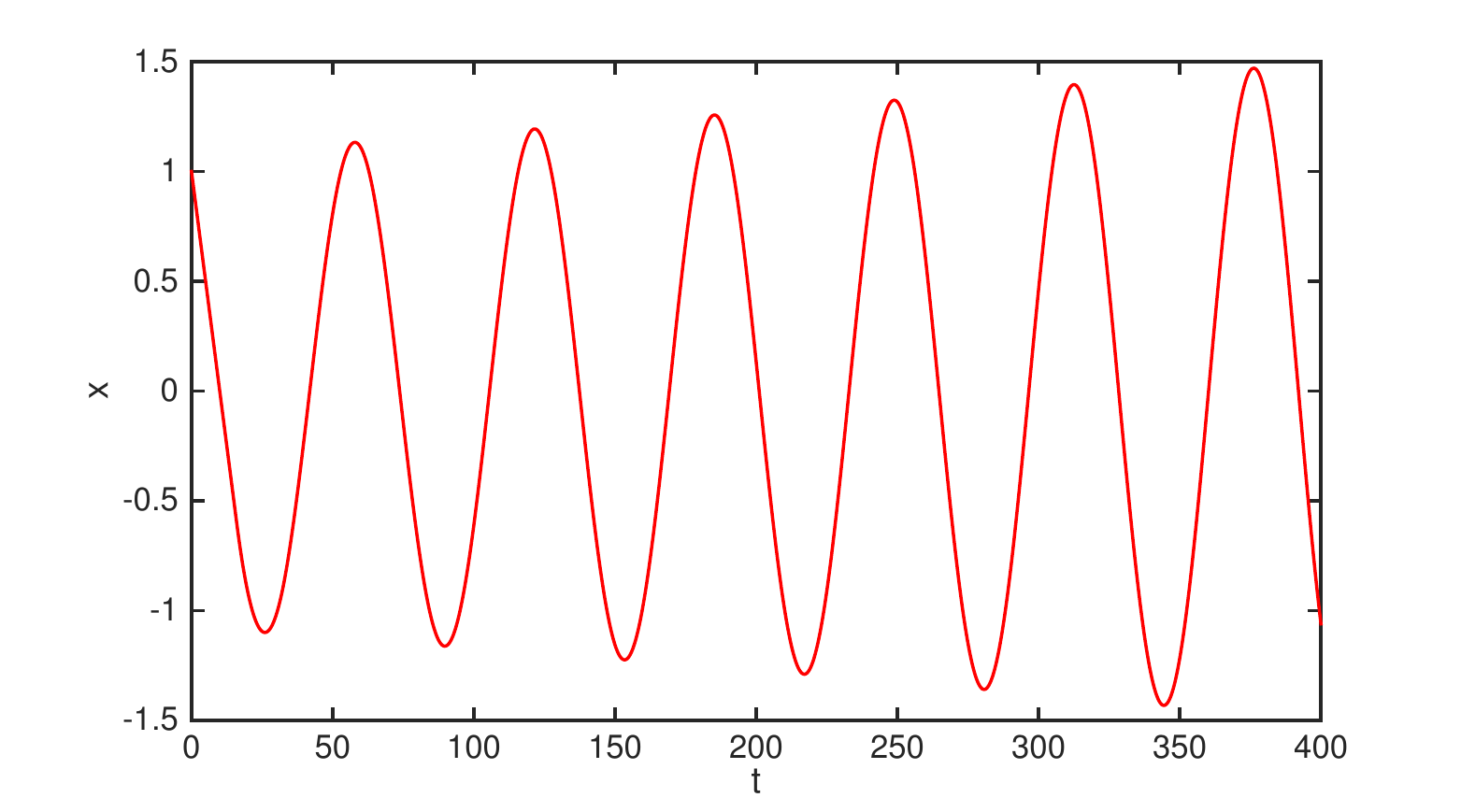}%
\caption{Trajectory of system~\eqref{linear.sys} with $u_1=\beta=0$. }
\label{unstable}
\end{figure}

In the following subsection, we will show the existence of Zeno behavior in control system~\eqref{linear.sys} with event-triggered implementation of $u_1$ and $\beta=0$.

\subsection{Event-triggered Control Method with Zeno Behavior}

Consider the event-triggered implementation of $u_1$ in system~\eqref{linear.sys} without impulses. Then the closed-loop system can be rewritten in the form of~\eqref{CL.sys}:
\begin{eqnarray}\label{linear.CLsys}
\left\{\begin{array}{ll}
\dot{x}(t)=b x(t-r)+kx(t)+k\mathbf{e}(t), \cr
x_{t_0}=\phi,
\end{array}\right.
\end{eqnarray}
where $\mathbf{e}(t)=x(t_i)-x(t)$ for $t\in [t_i,t_{i+1})$ with $i\in \mathbb{N}$, and the sequence of event times $\{t_i\}_{i\in\mathbb{N}}$ is to be determined according to~\eqref{et.time}. Consider the Lyapunov function $V(x)=x^2$.  Then condition (i) of Theorem~\ref{Th.ISS} is satisfied with $\alpha_1(|x|)=\alpha_2(|x|)=x^2$. Using~\eqref{linear.CLsys}, it follows that
%\kexue{To distinguish from Euler's number $e$, we used $\mathbf{e}$ to denote the measurement error throughout this paper.}
\begin{align*}
\dot{V}(x(t))&  =  2kx^2+2bxx(t-r)+2kx\mathbf{e}\cr
          &\leq 2kx^2+|b|(\varepsilon x^2+\varepsilon^{-1}x^2(t-r))+ |k|(\epsilon x^2+\epsilon^{-1}\mathbf{e}^2)\cr
          &  =  (2k+\varepsilon|b|+\epsilon|k|)V(x(t))+\varepsilon^{-1}|b|V(x(t-r))+\epsilon^{-1}|k|\mathbf{e}^2
\end{align*}
where Young's inequality was used twice with $\varepsilon=\sqrt{q}$ and $\epsilon=\sigma/|k|$. Whenever $qV(x(t))\geq V(x(t+s))$ for all $s\in [-r,0]$ with some $q>1$, we have
\begin{equation}
\dot{V}(x)=(2k+\varepsilon|b|+\varepsilon^{-1}q|b|+\epsilon|k|)V(x) + \epsilon^{-1}|k|\mathbf{e}^2,\nonumber
\end{equation}
then, condition (ii) of Theorem~\ref{Th.ISS} holds with 
\[
c=-(2k+\varepsilon|b|+\varepsilon^{-1}q|b|+\epsilon|k|)>0 \ \mathrm{and} \
\chi(|e|)=\epsilon^{-1}|k|\mathbf{e}^2.
\]
Therefore, assumption~\ref{A1} is satisfied for system~\eqref{linear.CLsys}. The event times defined by~\eqref{et.time} are as follows:
\begin{equation}\label{et.time.linear}
t_{i+1}=\inf\{t\geq t_i \mid  \epsilon^{-1}|k|\mathbf{e}^2 = \sigma x^2\},
\end{equation}
where positive constant $\sigma$ satisfies $\sigma<c$. Using the fact $\epsilon=\sigma/|k|$, we can rewrite~\eqref{et.time.linear} as
\begin{equation}\label{et.time.linear1}
t_{i+1}=\inf\{t\geq t_i \mid~  \mathbf{e}^2 = \sigma_0 x^2\}
\end{equation}
and the condition $\sigma<c$ as
\begin{equation}\label{inequ}
2k+\varepsilon|b|+\varepsilon^{-1}q|b|+\epsilon|k|+\epsilon^{-1}|k|\sigma_0<0,
\end{equation}
where $\sigma_0=\epsilon^2$. Then,
\begin{equation}\label{inequ1}
k+\sqrt{q}|b|+\sqrt{\sigma_0}|k|<0
\end{equation}
implies~\eqref{inequ}, by the facts that $\varepsilon=\sqrt{q}$ and $\epsilon=\sigma/|k|$. According to our analysis on system~\eqref{CL.sys}, if~\eqref{inequ1} holds, then closed-loop system~\eqref{linear.CLsys} is GAS with the event times determined by~\eqref{et.time.linear1}.

%\margin{You could decide to remove this plot; I guess it should be easy to see/show this and so we may remove it.}
%\marginb{Is this comment for Fig.1? The instability of system (10) without $u$ is not straightforward, so this figure is provided to show this property.}
%\begin{figure}[t]\centering
%\includegraphics[width=2.5in]{fig1a.eps}%
%\caption{Trajectory of system~\eqref{linear.sys} with $u=0$.}
%\label{fig1}
%\end{figure}
%
%\begin{figure}[t]\centering
%\includegraphics[width=2.5in]{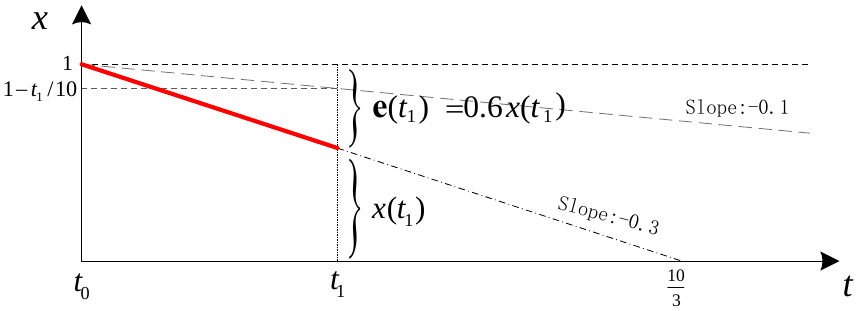}%
%\caption{Trajectory of system~\eqref{linear.CLsys} on $[t_0,t_1]$.}
%\label{fig2}
%\end{figure}

The main reason of studying linear scalar system~\eqref{linear.CLsys} is that the existence of Zeno behavior can be verified analytically. To do so, we choose $\sigma_0=0.36$ so that~\eqref{inequ1} is satisfied. We will show there are infinitely many event times over the time interval $[0,10]$. We first prove that $t_1<10$ and $0<x(t_1)<1-t_1/10$. For $t\leq r$, we have $x(t-r)=x(t_0)=1$ and $\dot{x}=-0.1x(t-r)-0.2x(t_0)=-0.3<-0.1$, 
for $t\leq \min\{t_1,r\}$. Therefore, both $x$ and $\mathbf{e}$ are positive, $x$ is strictly decreasing, and $\mathbf{e}$ is strictly increasing for $t\in (t_0,\min\{t_1,r\})$. By~\eqref{et.time.linear1}, $\mathbf{e}^2(t_1)=\sigma_0 x^2(t_1)$, that is, $\mathbf{e}(t_1)=0.6x(t_1)$ which implies $x(t_1)=x(t_0)/1.6$. We then can conclude that $x$ strictly decreases from $1$ at $t_0$ to $x(t_0)/1.6$ at $t_1$ with $t_1<10$, and the decreasing rate is smaller than $-0.1$. Thus, $0<x(t_1)<1-t_1/10$ (see Fig.~\ref{Zenoproof} for an illustration of the above discussion). Next, suppose $t_m<10$ and $0<x(t_m)<1-t_m/10$ for some $m\geq 1$. From~\eqref{linear.CLsys}, we have 
\[
\dot{x}=-0.1-0.2x(t_m)<-0.1, \quad t\in [t_m,\min\{t_{m+1},r\}).
\] 
We derive that $x$ strictly decreases from $x(t_m)$ at $t=t_m$ to $x(t_{m+1})=x(t_m)/1.6$ at $t=t_{m+1}$ with $t_{m+1}<10$ and decreasing rate less than $-0.1$. Hence, $x(t_{m+1})<1-t_{m+1}/10$. Based on the above discussion, we conclude from mathematical induction that there are infinitely many event times on $[0,10]$, that is, control system~\eqref{linear.CLsys} exhibits Zeno behavior (see Fig.~\ref{fig3} for a numerical demonstration).

\begin{figure}[!t]\centering
\includegraphics[width=3.2in]{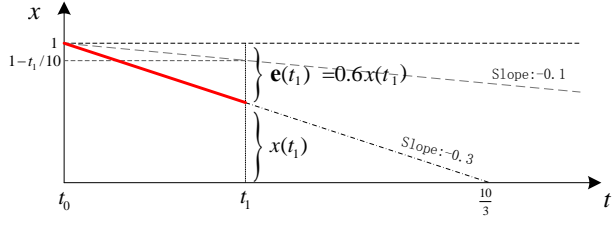}%
\caption{Trajectory of system~\eqref{linear.CLsys} on $[t_0,t_1]$. }
\label{Zenoproof}
\end{figure}

\begin{figure}[!t]\centering
\includegraphics[width=3.2in]{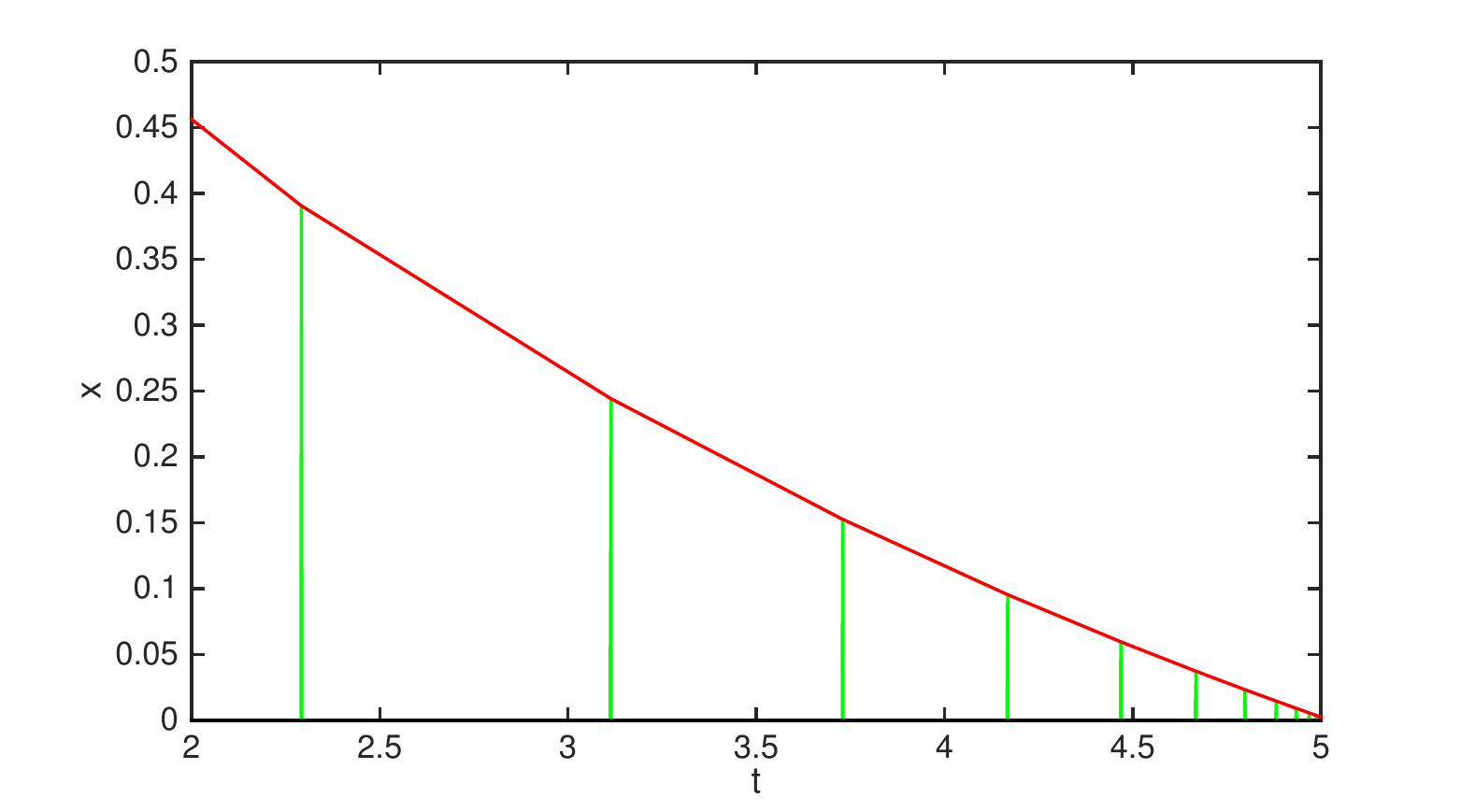}%
\caption{Trajectory of closed-loop system~\eqref{linear.CLsys} with the event times determined by~\eqref{et.time.linear1}. The green line segments indicate the event times. }
\label{fig3}
\end{figure}

For this particular system, its trajectory intersects with the time axis, and $\dot{x}$ is bounded. Therefore, {it takes less and less time for $|\mathbf{e}|$ to evolve from $0$ to $\sqrt{\sigma_0} |x|$ as $x$ getting closer and closer to $0$}. For linear scalar systems without time-delay, this property does not hold mainly because its trajectory does not go to zero in a finite time. Thus, the well-known result in \cite{PT:2007} for delay-free systems cannot be generalized to time-delay systems seamlessly in the sense of excluding Zeno behavior. The above example indicates that ruling out Zeno behavior cannot be guaranteed even for linear control systems with time-delay, though the sequence of event times determined by~\eqref{et.time} assures the GAS of the event-triggered control system~\eqref{linear.CLsys}.

Next, we will demonstrate the effectiveness of the proposed hybrid control algorithm.
\subsection{Application of the Hybrid Control Algorithm}\label{subsecIV-B}

Using the proposed hybrid strategy with~\textbf{\HEI}, the closed-loop system~\eqref{linear.sys} can then be written as a linear impulsive system:
\begin{eqnarray}\label{linear.isys}
\left\{\begin{array}{ll}
\dot{x}(t)=b x(t-r)+k x(t_{i}), \textrm{~for~} t\in[t_i,t_{i+1}) \cr
\Delta x(t_{i+1})=\beta x(t^-_{i+1}),~\textrm{~if~} t_{i+1}=t_i+h \cr 
x_{t_0}=\phi
\end{array}\right.
\end{eqnarray}
where the event-time sequence $\{t_i\}_{i\in\mathbb{N}}$ are chosen according to~\textbf{\HEI}, and the constants $\beta$ and $h$ are to be determined by using Theorem~\ref{Th.main}. It is not hard to observe that assumption~\ref{A1} holds with the given parameters and Lyapunov function $V(x)=x^2$. Next, we show that conditions~(i) and~(ii) of Theorem~\ref{Th.main} hold for system~\eqref{linear.isys}. When $t_{i+1}=t_i+h$, we can derive from the continuous dynamics of~\eqref{linear.isys} that, whenever $qV(x(t))\geq V(x(t+s))$ for all $s\in[-\tau,0]$, we have 
\begin{align}\label{precon3}
\dot{V}(x(t)) &\leq |b|[\varepsilon x^2+\varepsilon^{-1} x^2(t-r)] +|k|[\epsilon x^2+\epsilon^{-1} x^2(t_i)]\cr
           & =  (\varepsilon|b|+\epsilon|k|)x^2 + \varepsilon^{-1}|b|x^2(t-r) +\epsilon^{-1}|k|x^2(t_i)\cr
           & =  \bar{c}V(x(t)),
\end{align}
where $\bar{c}=2\sqrt{q}(|b|+|k|)$ and $\tau=\max\{r,h\}$. The first inequality of~\eqref{precon3} follows from Young's inequality with $\varepsilon=\sqrt{q/|b|}$ and $\epsilon=\sqrt{q/|k|}$. In the second inequality, we used the fact $x(t_i)=x(t-\delta(t))$ with $\delta(t)=t-t_i$ and $0\leq \delta(t)\leq h$ for $t\in[t_i,t_{i+1})$. Here, we have deemed $\delta(t)$ to be a time-varying delay when checking the Razumikhin-type condition (i) of Theorem~\ref{Th.main}. Using~\eqref{linear.isys}, we have that 
\[
V(x(t_{i+1}))=x^2(t_{i+1})=(1+\beta)^2 x^2(t^-_{i+1})=(1+\beta)^2 V(x(t^-_{i+1})),
\] 
that is, condition~(ii) of Theorem~\ref{Th.main} holds with $\rho=(1+\beta)^2$.

Therefore, we conclude from Theorem~\ref{Th.main} that if there exists a $q>1$ such that both~\eqref{inequ1} and condition (iii) of Theorem~\ref{Th.main} are satisfied, the closed-loop system~\eqref{linear.isys} is GAS, and $h$ is the lower bound of the inter-execution times.

To demonstrate the effectiveness of the proposed control algorithm and Theorem~\ref{Th.main}, let $q=3$, $h=0.666$ and $\beta=-0.293$ so that both~\eqref{inequ1} and condition (iii) of Theorem~\ref{Th.main} hold (See Remark~\ref{PSelection} for the procedure of parameter selections). Fig.~\ref{fig4} shows the stability of system~\eqref{linear.isys}. Actually, system~\eqref{linear.isys} is globally exponentially stable since $\alpha_1(|x|)=\alpha_2(|x|)=V(x)=x^2$ in condition (i) of Theorem~\ref{Th.ISS}. As discussed for system~\eqref{linear.CLsys}, the event-triggered control inputs are updated more and more frequently when the state $x$ gets closer and closer to zero. This explains why the impulses are generally activated around the intersections between the trajectory $x$ and the time axis in Fig.~\ref{fig4}. The reason for the existence of large inter-execution times is that it takes more time for $|\mathbf{e}|$ to evolve from zero at each event time to $\sqrt{\sigma_0} |x|$ at the next event time if $|x|$ is fairly large and/or $|\dot{x}|$ is relatively small.

\begin{remark}\label{PSelection}
Based on the Lyapunov function in Assumption~\ref{A1}, we have shown that conditions (i) and (ii) of Theorem~\ref{Th.main} hold with $\bar{c}=0.6\sqrt{q}$ and $\rho=(1+\beta)^2$. Parameters $h$, $q$, and $\rho$ should be selected so that the inequalities in condition (iii) are satisfied. First, we determine appropriate values of parameter $h$. It can be concluded from condition (iii) that $h<\frac{\ln (q)}{\bar{c}}=\frac{5\ln (q)}{3\sqrt{q}}$. Let $ G $ be defined as $G(q)=\frac{5\ln (q)}{3\sqrt{q}}$ for $q>1$. We then can see that the largest admissible $h$ is smaller than $\max_{q>1} G(q)$. 
It is easy to observe that $\max_{q>1} G(q)= G(e^2)\approx 1.2663$. Therefore, any $h<1.2663$ can be chosen as the lower bound of the inter-execution times; we have used $h=0.666$ in our simulation. Once parameter $h$ is obtained, we need to derive the values of $q$ so that $q>e^{\bar{c}h}$ with $h=0.666$ and $\bar{c}=0.6\sqrt{q}$, then parameter $\rho$ can be determined from condition~(iii). To identify suitable $q$, we define $H(q)=q-e^{\bar{c}h}$. We can see that equation $H(q)=0$ has two solutions $q_1\approx 1.6792$ and $q_2\approx 161.62$, and $H(q)>0$ for any $q_1<q<q_2$, that is, $q>e^{\bar{c}h}$ for all $q\in(q_1,q_2)$. Therefore, we can select $\rho$ such that $q_1<1/\rho<q_2$, that is, $0.0062<\rho<0.5955$, and then there exists $q\in (1/\rho,q_2)$ so that condition (iii) is satisfied. In the simulation, we chose $\rho=(1+\beta)^2\approx 0.4998<0.5955$. For different time-delay control systems, the dependence of $\bar{c}$ on parameter $q$ is different. However, the selections of parameter $h$ and $\rho$ can be conducted by following the selection procedure introduced above for linear control system~\eqref{linear.isys}.
\end{remark}

\begin{figure}[!t]
\centering
\includegraphics[width=3.2in]{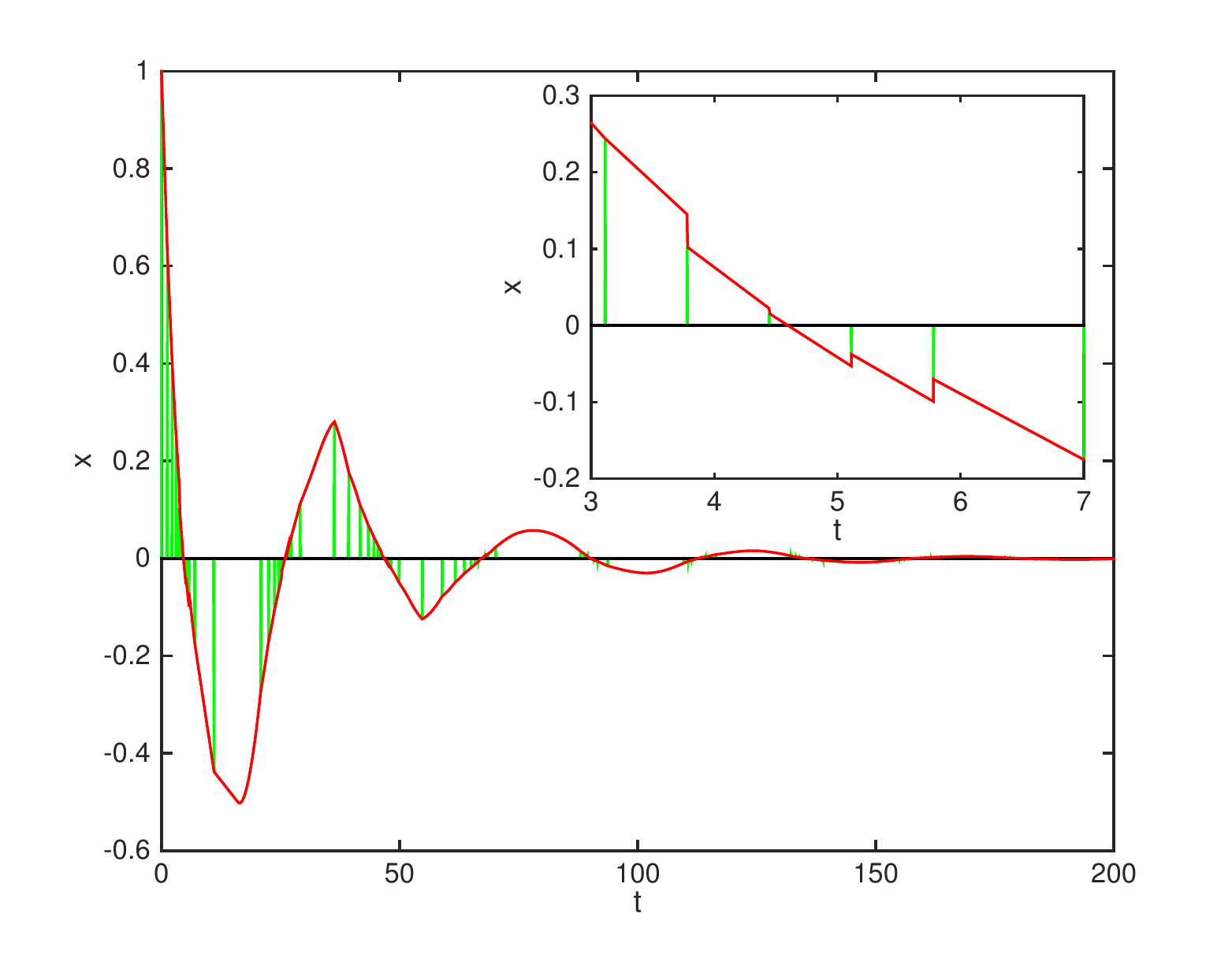}
\caption{Trajectory of system~\eqref{linear.isys}: the red curve represents the state $x$; the vertical green line segments correspond to the event times; the \textit{vertical} red line segments indicate the impulses in~\eqref{linear.isys}. A clear view of the impulses in system~\eqref{linear.isys} over the time interval $[3,7]$ is included within the figure.}
\label{fig4}
\end{figure}

\subsection{Comparison with Impulsive Control Strategy}

Without the feedback control input $u_1$, that is, $u_1\equiv 0$, we can derive from the previous discussion that conditions (i), (ii), and (iii) of Theorem~\ref{Th.main} hold with $\bar{c}$ replaced by $\bar{c}=2\sqrt{q}|b|$. According to the analysis in Remark~\ref{remark.impulse}, impulsive control system~\eqref{linear.isys} with $u_1\equiv 0$ is GAS under the impulse inputs considered in Fig.~\ref{fig4} (see Fig.~\ref{fig5} for a demonstration). It can be observed that far fewer impulse inputs are activated in the hybrid control system~\eqref{linear.isys} with \textbf{\HEI}. Moreover, less control updates are triggered in our hybrid control scheme since the inter-execution times are lower bounded by the inter-impulse time $h$, and this is verified by the observation that the event times in Fig.~\ref{fig4} is less than the impulse times in Fig.~\ref{fig5}.

\begin{figure}[!t]
\centering
\includegraphics[width=3.2in]{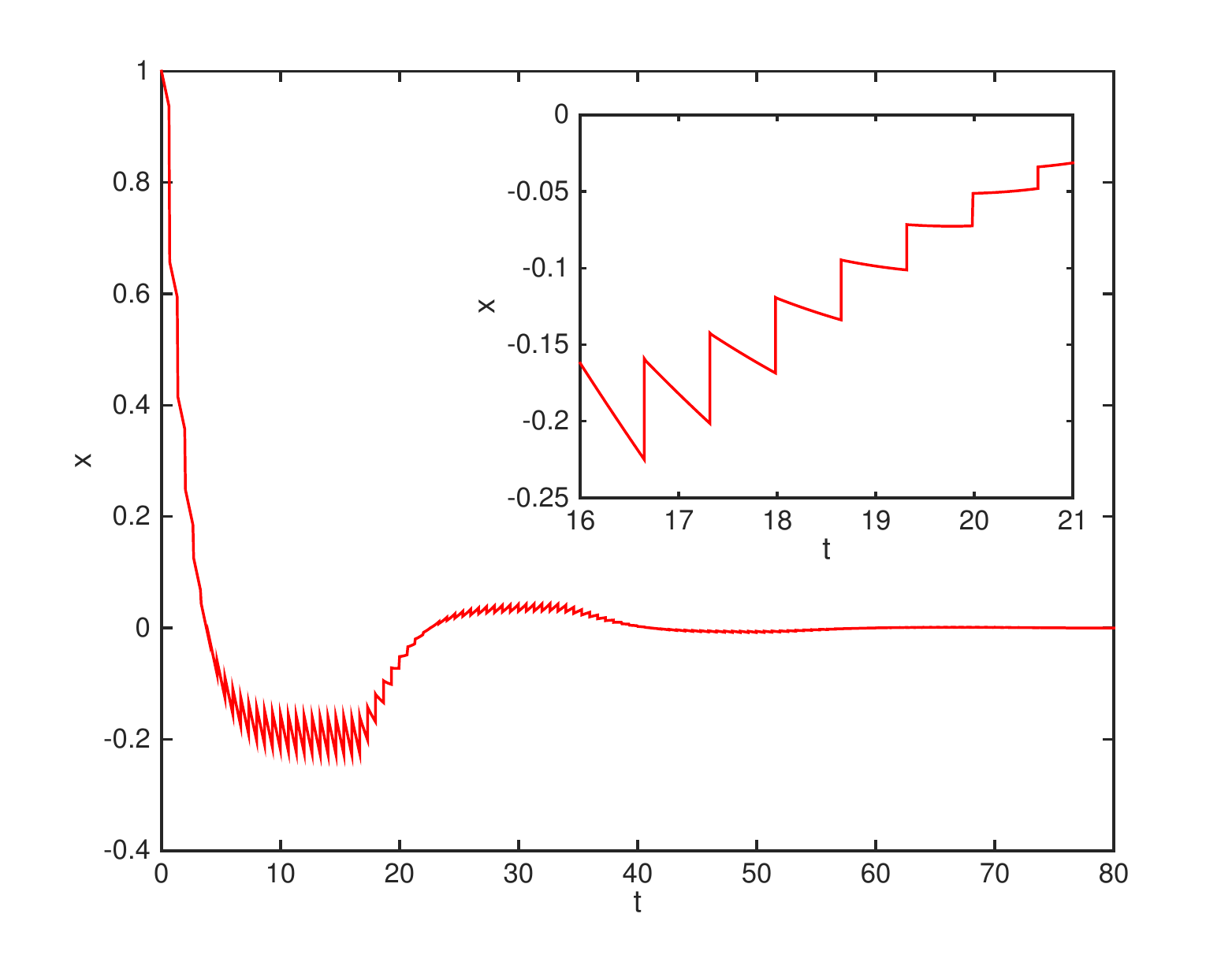}
\caption{Trajectory of impulsive control system~\eqref{linear.isys} with $k\equiv 0$ and $t_{i+1}-t_i=h=0.666$ ($i\in\mathbb{N}$). The inserted figure demonstrates a larger view of the trajectory over interval $[16,21]$. The \textit{vertical} line segments correspond to the state jumps.}
\label{fig5}
\end{figure}
\section{Conclusions}\label{conclusion}

We have studied hybrid stabilization problem of nonlinear time-delay systems. A hybrid event-triggered control algorithm has been proposed to stabilize the nonlinear systems with time-delay. To exclude Zeno behavior due to the presence of delay, our hybrid control algorithm has incorporated the impulsive control mechanism into the event-triggering scheme to guarantee the nonexistence of Zeno behavior. Future work includes applying our control algorithm to various related control problems, such as, consensus of multi-agent systems, distributed optimization, and seeking parallel algorithms based on the method of Lyapunov-Krasovskii functionals.

%\section*{References}

\end{document}